\theoremstyle{plain}
\newtheorem{theorem}{Theorem}
\newtheorem{definition}[theorem]{Definition}
\newtheorem{lemma}[theorem]{Lemma}
\newtheorem{proposition}[theorem]{Proposition}
\theoremstyle{definition}
\newtheorem{remark}[theorem]{Remark}
\newcommand{\cC}{\mathcal{C}}
\newcommand{\cH}{\mathcal{H}}
\newcommand{\cL}{\mathcal{L}}
\newcommand{\cM}{\mathcal{M}}
\newcommand{\cO}{\mathcal{O}}
\newcommand{\cS}{\mathcal{S}}
\newcommand{\cT}{\mathcal{T}}
\newcommand{\cV}{\mathcal{V}}
\DeclareMathOperator{\Tr}{Tr}
\DeclareMathOperator{\id}{id}
\newcommand{\Her}[1]{\cL_{\mathrm{H}}(#1)}
\newcommand{\Psd}[1]{\cL_{\mathrm{H}}^+(#1)}
\begin{document}

\title{Detection of Beyond-Quantum Non-locality based on Standard Local Quantum Observables}
\author{Hayato Arai}
\email{m18003b@math.nagoya-u.ac.jp}
\affiliation{Graduate School of Mathematics, Nagoya University, Furo-cho, Chikusa-ku, Nagoya, 464-8602, Japan}

\author{Baichu Yu}
\email{yubc@sustech.edu.cn}
\affiliation{Shenzhen Institute for Quantum Science and Engineering, 
Southern University of Science and Technology, Nanshan District, Shenzhen, 518055, China}
\affiliation{International Quantum Academy (SIQA), Shenzhen 518048, China}

\author{Masahito Hayashi}
\email{hmasahito@cuhk.edu.cn}
\affiliation{School of Data Science, The Chinese University of Hong Kong, Shenzhen, Longgang District, Shenzhen, 518172, China}
\affiliation{International Quantum Academy (SIQA), Shenzhen 518048, China}
\affiliation{Graduate School of Mathematics, Nagoya University, Furo-cho, Chikusa-ku, Nagoya, 464-8602, Japan}

\begin{abstract}
Device independent detections of quantum non-locality like 
Bell-CHSH inequality are 
important methods
to detect quantum non-locality
because the whole protocol can be implemented 
by uncertified local observables.
However, this detection is not sufficient for the justification of standard quantum theory,
because there are theoretically many types of beyond-quantum non-local states in General Probabilistic Theories.
One important class is Entanglement Structures (ESs),
which contain beyond-quantum non-local states 
even though their local systems are completely equivalent to standard quantum systems.
This paper shows that any device independent detection cannot distinguish beyond-quantum non-local states from standard quantum states.
To overcome this problem,
this paper gives a device dependent detection based on local observables to distinguish any beyond-quantum non-local state from all standard quantum states.
Especially, we give a way to detect any beyond-quantum non-local state in two-qubit ESs
by observing only spin observables on local systems.
\end{abstract}

\maketitle

\textit{Introduction}---%
Bell's inequality \cite{Bell1964} (or CHSH inequality \cite{CHSH1970}) is one of the important ways to detect quantum non-locality in our physical systems.
Bell-CHSH inequality (hereinafter, CHSH inequality) consists of bipartite players and their local operations.
It is especially important that the protocol of CHSH inequality can be implemented by local observables.
In other words,
by implementing the protocol of CHSH inequality as a bipartite communication task,
we can experimentally detect quantum non-locality of our physical systems when Bell-CHSH inequality is violated.
Actually, the violation of CHSH inequality is confirmed in physical experiments \cite{FC1972,ADR1982,WJS1998,ZZHE1993,RKM2001,SUKZ2010}.
Moreover, CHSH inequality can be implemented without certification of measurement devices.
Such detection without certification of measurement devices is called \textit{device independent} (DI) detection \cite{Review2014,NPA2008,Scarani2012,GKW2018,SB2020,ABG2007,PAB2009,VV2019,BBB2010}.
These remarkable results played an important role in the early studies of quantum physics and quantum information theory
to ensure that our physical systems truly possess quantum non-locality.

However,
it is not sufficient for the strict verification of quantum theory to detect standard quantum non-locality
because
there are many other theoretical models with non-locality than quantum systems.
Such models can be described as General Probabilistic Theories (GPTs)
\cite{PR1994,Plavala2017,Plavala2021,Pawlowski2009,Short2010,Barnum2012,Barrett2007,CDP2010,CS2015,CS2016,Muller2013,BLSS2017,Barnum2019,AH2022,Janotta2014,KBBM2017,Matsumoto2018,Takagi2019,Takakura2022,Arai2019,ALP2019,YAH2020,ALP2021,MAB2022,Barnum.Steering:2013}.
GPT is a framework for general theoretical models with states and measurements, including classical and quantum theories.
\textit{PR-box} \cite{PR1994,Plavala2017,Plavala2021} is a typical example of non-local models with beyond-quantum non-locality.
In PR-box, the CHSH value attains four even though the bound in quantum theory is given as $2\sqrt{2}$, known as Tirelson's bound \cite{Cirelson1980}.
In other words,
the CHSH inequality can detect the beyond-quantum non-locality in PR-box.

However, in contrast to models that can be detected by CHSH inequality,
there are models that cannot be detected by CHSH inequality
even though their local subsystems are completely equivalent to standard quantum systems.
Such models are called \textit{Entanglement Structures} (ESs) with local quantum subsystems \cite{Janotta2014,Arai2019,ALP2019,YAH2020,ALP2021,Barnum.Steering:2013},
including not only the Standard Entanglement Structure (SES), i.e., the standard quantum model defined by the tensor product
but also many other models.
Some ESs have fewer non-local states than the SES \cite{Arai2019,YAH2020},
and also, some ESs has beyond-quantum non-local states,
i.e.,
non-local states that do not belong to the SES \cite{AH2022,ALP2019,ALP2021,Barnum.Steering:2013}.
In order to ensure that our physical systems obey truly standard quantum theory,
it is also necessary to verify whether beyond-quantum non-local states exist or not.
However, preceding studies \cite{Banik:2012,Stevens:2013,Barnum.Steering:2013} have revealed that all ESs satisfy Tirelson's bound,
i.e.,
CHSH inequality cannot distinguish the SES from any beyond-quantum non-local state in ESs.

Furthermore,
as we show in this paper,
not only CHSH inequality,
but also any DI detection cannot distinguish any beyond-quantum non-local state from the SES.
Although a similar statement was shown by the reference \cite{BBB2010}, this paper also shows a corresponding statement in our setting with GPTs
 as Theorem~\ref{theorem:DI}.
Therefore, 
this paper deals with a device dependent detection of an arbitrary beyond-quantum non-local state in ESs by an experimental protocol.
First,
we give a device dependent detection separating an arbitrary given beyond-quantum state from all standard quantum states
as an inequality defined by local observables (Theorem~\ref{theorem:criterion}).
Next, we give a bipartite protocol to implement the above detection.
Our protocol consists of local operations by bipartite players Alice and Bob and classical communication by them.
In the protocol,
Alice and Bob detect whether a target state is beyond-quantum or not.
If the target state is truly beyond-quantum,
Alice and Bob conclude that the target state is beyond-quantum with high probability.

Our criterion and protocol are implemented by a complicated sequence of local observables in general.
However, in the 2-qubits case,
i.e.,
in the $2\times2$-dimensional case,
we give a simple detection of a beyond-quantum non-local state by observing Pauli's spin observables in a specific order.
It is known that maximally entangled states are detected when Alice and Bob observe Pauli's spin observable $\sigma_x,\sigma_y,\sigma_z$ in the same order with the sequence of coefficients $(1,1,-1)$ \cite{HMT2006,ZH2019}.
In contrast to this result,
we clarify that the sequence of coefficients $(1,1,1)$ detects a beyond-quantum non-local state.
Moreover, like Bell's scenario,
any beyond-quantum non-local ``pure'' state can be detected by sequential local observables biased in the same way as  $\sigma_x,\sigma_y,\sigma_z$ (Theorem~\ref{theorem:A3}).
As a result,
we give a convenient detection for beyond-quantum non-local pure states like Bell's inequality in the 2-qubits case.

\textit{The setting of GPTs and entanglement structures}---%
A model of GPTs is defined as follows.
\begin{definition}[A Model of GPTs]\label{def:model}
	A model of GPTs is defined by a tuple $\bm{G}=(\cV,\langle\ ,\ \rangle,\cC,u)$,
	where $(\cV, \langle\ ,\ \rangle)$, $\cC$, and $u$ are a real-vector space with inner product, a proper cone, and an order unit of $\cC^\ast$, respectively.
\end{definition}
For a model of GPTs $\bm{G}$,
state space and measurement space are defined as follows.
\begin{definition}[State Space of GPTs]
	Given a model of GPTs $\bm{G}=(\cV,\langle\ ,\ \rangle,\cC,u)$,
	the state space of $\bm{G}$ is defined as
	\begin{align}\label{def:state}
		\cS(\bm{G}):=\left\{\rho\in\cV\middle|\langle\rho,u\rangle=1\right\}.
	\end{align}
	Here, we call an element $\rho\in\cS(\bm{G})$ a state of $\bm{G}$.
\end{definition}


\begin{definition}[Measurements of GPTs]\label{def:measurement}
	Given a model  of GPTs $\bm{G}=(\cV,\langle\ ,\ \rangle,\cC,u)$,
	we say that a family $\{M_i\}_{i\in I}$ is a measurement
	if $M_i\in\cC^\ast$ and $\sum_{i\in I} M_i=u$.
	Besides, the index $i$ is called an outcome of the measurement.
	Here, the set of all measurements is denoted as $\cM(\bm{G})$.
	Especially, the set of measurements with $n$-number of outcomes is denoted as
	$\cM_n(\bm{G})$.
\end{definition}

In this setting,
state space and measurement space are always convex.
We call an element \textit{pure} when the element is extremal.
Also, when a state $\rho\in\cS(\bm{G})$ is measured by a measurement $\{M_i\}\in\cM(\bm{G})$,
the probability $p_i$ to get an outcome $i$ is given as
\begin{align}
	p_i:=\langle \rho, M_i\rangle.
\end{align}

Quantum theory is a typical example of a model of GPTs,
i.e.,
quantum theory is given as a model $\bm{G}=(\cL_{\mathrm{H}}(\cH), \Tr,\cL^+_{\mathrm{H}}(\cH)),I)$,
where $\cL_{\mathrm{H}}(\cH)$ and $\cL^+_{\mathrm{H}}(\cH)$ be the set of Hermitian matrices and the set of positive semi-definite matrices on a finite dimensional Hilbert space $\cH$, respectively.
In this model,
the state space is the set of density matrices,
also the measurement space is the set of Positive Operator Valued Measures (POVMs).

In order to discuss Bell scenario,
we introduce the model of composite systems in GPTs.
Given two models in GPTs $\bm{G}_A=(\cV_A,\langle\ ,\ \rangle_A,\cC_A,u_A)$ and $\bm{G}_B=(\cV_B,\langle\ ,\ \rangle_B,\cC_B,u_B)$,
a model $\bm{G}=(\cV,\langle\ ,\ \rangle,\cC,u)$ of composite system of two models is defined as follows \cite{Plavala2021}:
The vector space $\cV$ is defined as the tensor product $\cV=\cV_A\otimes\cV_B$.
The inner product $\langle\ ,\ \rangle$ is defined as the induced inner product by the tensor product,
i.e.,
the inner product $\langle x_1,x_2 \rangle$ is defined as $\langle x_1,x_2 \rangle=\sum_{i,j}\langle a_1^{(i)},a_2^{(j)}\rangle_A \langle b_1^{(i)},b_2^{(j)}\rangle_B$ for elements $x_1=\sum_i a_1^{(i)}\otimes b_1^{(i)}\in\cV$ and $x_2=\sum_j a_2^{(j)}\otimes b_2^{(j)}\in\cV$.
The cone $\cC$ is chosen as a cone satisfying the inequality
\begin{align}
	\cC_A\otimes\cC_B\subset\cC\subset\left(\cC_A^\ast\otimes\cC_B^\ast\right)^\ast,\label{eq:composite}
\end{align}
where the tensor product $\cC_A\otimes\cC_B$ is defined as
\begin{align}\label{eq:tensor}
	\cC_A\otimes\cC_B:=\left\{\sum_{ij} a_i\otimes b_j \middle| a_i\in\cC_A, b_j\in\cC_B\right\}.
\end{align}
The order unit $u$ is defined as $u=u_A\otimes u_B\in\cV$.

Especially,
in this paper,
we mainly consider \textit{Entanglement Structures (ESs) with local quantum subsystems} (hereinafter, we simply call them ESs),
i.e.,
models of composite systems $\bm{G}$ whose local systems $\bm{G}_A$ and $\bm{G}_B$ are equivalent to quantum theory.
By modifying the above conditions of the definition of composite systems,
we define an ES as follows.
\begin{definition}[Entanglement Structure \cite{Plavala2021,Janotta2014,ALP2019,AH2022}]
	We say that a model $\bm{G}=(\cL_\mathrm{H}(\cH_A\otimes\cH_B), \Tr, \cC,I)$ is an entanglement structure
	if $\cC$ satisfies
	\begin{align}\label{eq:quantum}
	\mathrm{SEP}(A;B)\subset\cC\subset\mathrm{SEP}(A;B)^\ast,
\end{align}
where the proper cone $\mathrm{SEP}(A;B)$ is defined as
\begin{align}\label{eq:sep}
	\mathrm{SEP}(A;B):=\Psd{\cH_A}\otimes\Psd{\cH_B}.
\end{align}
\end{definition}
Since an ES is characterized by a positive cone with \eqref{eq:quantum},
it is identified with the corresponding positive cone $\cC$.

The inclusion relation \eqref{eq:quantum} implies that
a model of  quantum composite system is not uniquely determined.
On the other hand,
standard quantum systems obey the only model $\mathrm{SES}(A;B):=\Psd{\cH_A\otimes\cH_B}$.
In this paper, we call this model the \textit{Standard Entanglement Structure} (SES).
The set $\mathrm{SEP}(A;B)^\ast$ contains non-positive Hermitian matrices,
and therefore,
a non-positive state is available in an ES $\cC\not\subset\mathrm{SES}(A;B)$.
We call a non-positive state,
i.e.,
a state $\rho\in\cS(\mathrm{SEP}^\ast(A;B))\setminus\cS(\mathrm{SES}(A;B))$ a \textit{beyond-quantum state}.
Our interest is how we detect beyond-quantum states if they exist.

\textit{Impossibility of Device-Independent Detection for Beyond-Quantum States}---%
First, we consider the possibility of the \textit{Device-Independent} (DI) detection of 
a beyond-quantum state (Figure~\ref{figure_DI}).
In the device-independent detection,
we have no certificate of measurement devices.
Therefore, 
it is natural to consider that
a beyond-quantum state $\rho_0\in\cS(\bm{SEP}^\ast(A;B))$ is 
distinguished device-independently 
by local measurements $M_{a}^{A}:=\{M_{a;i}^{A}\}_{i\in I}$ and $M_{b}^{B}:=\{M_{b;j}^{B}\}_{j\in J}$ 
from all standard quantum states 
when 
no pair of 
a standard quantum state and
local POVMs simulates the pair of  
the state $\rho_0$ and 
local POVMs $M_{a}^{A}:=\{M_{a;i}^{A}\}_{i\in I}$ and $M_{b}^{B}:=\{M_{b;j}^{B}\}_{j\in J}$, i.e.,
there does not exist a pair of
a standard quantum state
$\rho_1\in\cS(\mathrm{SES}(A;B))$
and 
local POVMs $M_{a}^{A}:=\{M_{a;i}^{A\prime}\}_{i\in I}$ and $M_{b}^{B}:=\{M_{b;j}^{B\prime}\}_{j\in J}$
such that the relation
\begin{align}
\begin{aligned}
\Tr \rho_0 M_{a;i}^{A}\otimes M_{b;j}^{B}
= \Tr \rho_1 M_{a;i}^{A\prime}\otimes M_{b;j}^{B\prime}
	\end{aligned}\label{DI-detect}
\end{align}
holds for any $a,b,i,j$.
In other words,
a beyond-quantum state $\rho_0\in\cS(\bm{SEP}^\ast(A;B))$ is 
distinguished device-independently 
from all standard quantum states 
when 
there exist local measurements $M_{a}^{A}:=\{M_{a;i}^{A}\}_{i\in I}$ and $M_{b}^{B}:=\{M_{b;j}^{B}\}_{j\in J}$ 
to satisfy the above condition.
Therefore, the above device-independent detectability 
is equivalent to 
the impossibility of the simulation by a pair of 
a standard quantum state and local POVMs.

\begin{figure}[h]
	\centering
	\includegraphics[width=6cm]{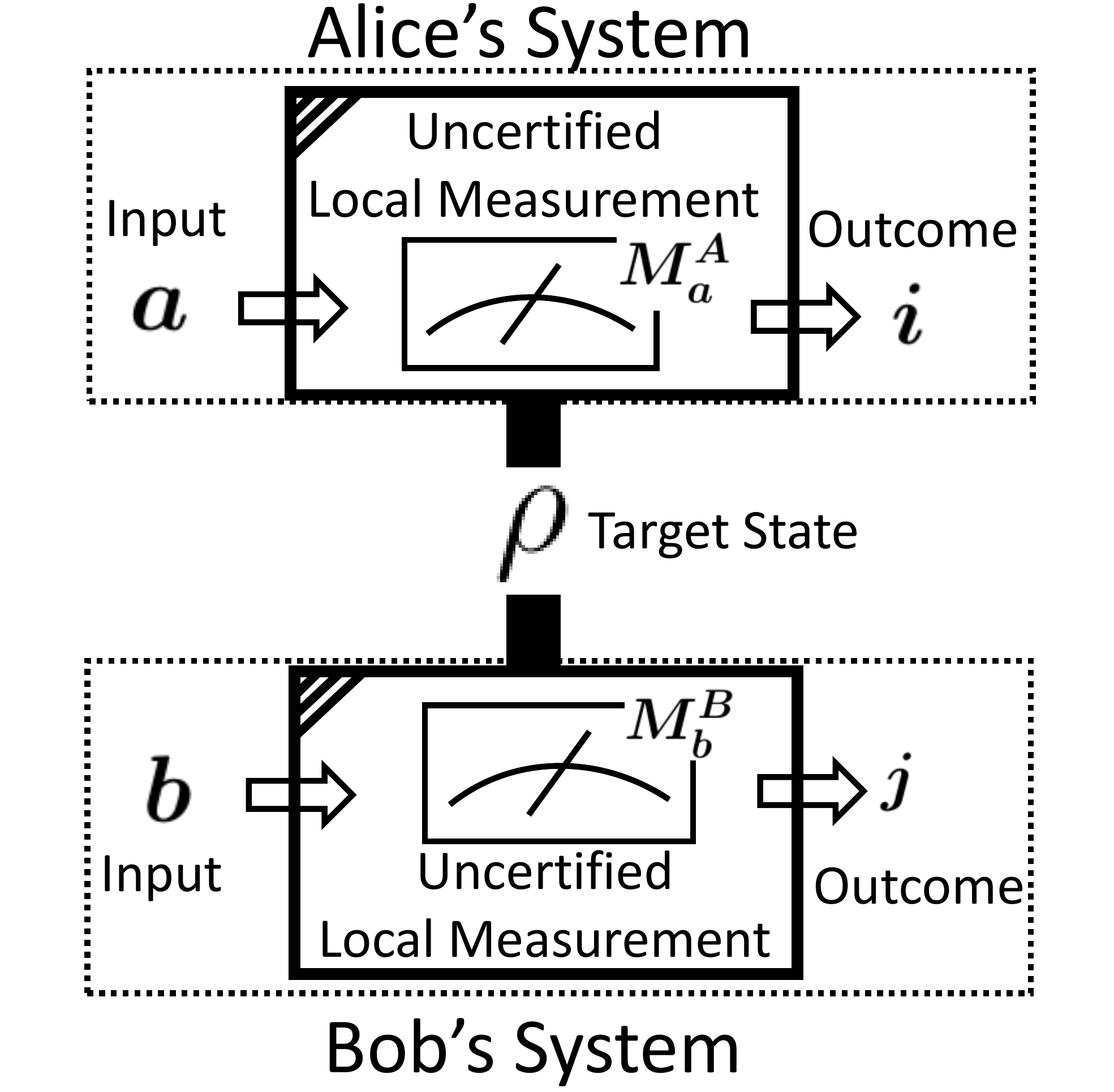}
	\caption{
	In DI detection, Alice and Bob apply uncertified local measurements $M_{a}^{A}:=\{M_{a;i}^{A}\}_{i\in I}$ and $M_{b}^{B}:=\{M_{b;j}^{B}\}_{j\in J}$ to a given non-local state $\rho$.
	Then, Alice and Bob determine whether $\rho$ is beyond-quantum by the probability $\Tr \rho M_{a;i}^{A}\otimes M_{b;j}^{B}$.
	}
	\label{figure_DI}
\end{figure}

However, previous studies \cite{Banik:2012,Stevens:2013,Barnum.Steering:2013} showed that CHSH inequality cannot detect any beyond-quantum states by noticing steering condition.
Furthermore,
the following theorem holds.

\begin{theorem}\label{theorem:DI}
For any pair of 
a beyond-quantum state $\rho_0\in\cS(\bm{SEP}^\ast(A;B))$ 
and 
local POVMs $M_{a}^{A}:=\{M_{a;i}^{A}\}_{i\in I}$ and $M_{b}^{B}:=\{M_{b;j}^{B}\}_{j\in J}$,
there exists
a pair of 
a standard quantum state
$\rho_1\in\cS(\mathrm{SES}(A;B))$
and 
local POVMs $M_{a}^{A}:=\{M_{a;i}^{A\prime}\}_{i\in I}$ and $M_{b}^{B}:=\{M_{b;j}^{B\prime}\}_{j\in J}$
to satisfy the condition \eqref{DI-detect}.
\end{theorem}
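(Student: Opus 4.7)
The natural strategy is a steering-based simulation: keep Bob's POVM unchanged and construct a standard quantum state $\rho_1$ together with new Alice POVMs that reproduce the conditional assemblage on Bob's side.

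First, I would observe that for any $\rho_0\in\cS(\mathrm{SEP}^\ast(A;B))$ the conditional operator $\sigma_{a;i}^B := \Tr_A[\rho_0\,(M_{a;i}^A\otimes I_B)]$ is positive semi-definite on $\cH_B$. This is the crucial consequence of $\rho_0$ lying in the dual cone: for any $|\phi\rangle\in\cH_B$, the rank-one operator $M_{a;i}^A\otimes\ketbra{\phi}{\phi}$ lies in $\mathrm{SEP}(A;B)$ and so pairs non-negatively with $\rho_0\in\mathrm{SEP}(A;B)^\ast$, giving $\langle\phi|\sigma_{a;i}^B|\phi\rangle\geq 0$. The choice $M_{a;i}^A=I_A$ shows that $\rho_B:=\Tr_A\rho_0$ is a genuine quantum state, and summing over $i$ yields $\sum_i\sigma_{a;i}^B=\rho_B$. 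Hence $\{\sigma_{a;i}^B\}_i$ is an authentic quantum ensemble decomposition of $\rho_B$ for each input $a$.

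Second, I would invoke the Hughston--Jozsa--Wootters (HJW) theorem. For a purification $|\Psi\rangle\in\cH_A\otimes\cH_B$ of $\rho_B$, HJW yields, for each $a$, a POVM $\{M_{a;i}^{A\prime}\}_i$ on $\cH_A$ with $\Tr_A[\ketbra{\Psi}{\Psi}(M_{a;i}^{A\prime}\otimes I_B)]=\sigma_{a;i}^B$. Setting $\rho_1=\ketbra{\Psi}{\Psi}$ and $M_{b;j}^{B\prime}=M_{b;j}^B$, the joint probability on the standard quantum side becomes
\begin{align*}
\Tr[\rho_1(M_{a;i}^{A\prime}\otimes M_{b;j}^{B\prime})]
= \Tr[\sigma_{a;i}^B M_{b;j}^B]
= \Tr[\rho_0\,(M_{a;i}^A\otimes M_{b;j}^B)],
\end{align*}
which is precisely the relation \eqref{DI-detect}.

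The only technical subtlety I anticipate is dimensional: a purification of $\rho_B$ requires an ancillary system of dimension at least $\rank\rho_B$, which may exceed $\dim\cH_A$. The fix is to allow $\cH_A$ in the simulating standard quantum system to be enlarged as needed, or equivalently to replace the pure $\rho_1$ by a mixed ``tagged'' state in $\cS(\mathrm{SES}(A;B))$, built as a convex combination of rank-one dilations indexed by the input $a$, on which Alice's POVM first reads off the tag and then applies the HJW POVM to the appropriate block. Beyond this bookkeeping the steering construction is routine; Theorem~\ref{theorem:DI} ultimately records the fact that every quantum-valid assemblage is realizable inside standard quantum theory, which is the essence of the classical version given in Ref.~\cite{BBB2010}.
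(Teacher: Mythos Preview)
Your argument is correct and complete in substance. The key observation---that $\rho_0\in\mathrm{SEP}(A;B)^\ast$ forces every conditional operator $\sigma_{a;i}^B=\Tr_A[\rho_0(M_{a;i}^A\otimes I_B)]$ to be positive semi-definite, hence the whole assemblage $\{\sigma_{a;i}^B\}$ is quantum-realizable via HJW/GHJW steering---is exactly the right idea and delivers the conclusion cleanly.

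The paper proceeds by the dual construction. Rather than looking at Alice's assemblage on Bob's side and invoking HJW to replace Alice's POVMs, it rescales $\rho_0$ by $(\rho^A)^{-1/2}\otimes I$ on the support of $\rho^A:=\Tr_B\rho_0$, reads the resulting element of $\mathrm{SEP}^\ast$ as the Choi matrix of a positive trace-preserving map $F:\Her{\cH_A'}\to\Her{\cH_B}$, and writes $\rho_0=(\id\otimes F)(\sigma)$ for a genuine quantum state $\sigma$ (a purification of $\rho^A$). The simulation then keeps Alice's POVMs and replaces Bob's by $F^\dag(M_{b;j}^B)$, using that $F^\dag$ is positive and unital. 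So the paper changes Bob's measurements via the adjoint of a positive map, while you change Alice's via HJW; the two routes are mirror images of the same mechanism. Your packaging is arguably more elementary (no Choi--Jamio\l kowski machinery, just steering), and it makes the ``quantum-realizable assemblage'' interpretation explicit; the paper's packaging has the advantage of producing the replacement POVM by an explicit formula $F^\dag(\,\cdot\,)$ rather than an existence statement. Your dimension caveat is real but it afflicts the paper's construction too (its simulating state lives on $\cH_A'\otimes\cH_A'$, not on $\cH_A\otimes\cH_B$), and in the device-independent context it is harmless. One small note: your ``tagged state'' alternative is unnecessary and somewhat unclear---a single fixed purification of $\rho_B$ already works for all inputs $a$ simultaneously, so the only issue is the ancilla dimension, which your first fix (enlarge $\cH_A$) handles.
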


Although the reference \cite{BBB2010} proved a similar statement, it does not formulate the problem with GPTs.
Further,
while 
the proof in \cite{BBB2010} has a problem caused by
an inverse of a key operator,
our proof does not have such a problem
because our proof is straightforward and different from that of 
the reference \cite{BBB2010}, as shown in Appendix.
Due to Theorem~\ref{theorem:DI},
it is impossible to 
distinguish a beyond-quantum state 
from all standard quantum states.
To resolve this problem, 
instead of measurement devices without certification,
we need to employ 
measurement devices that are identified 
with certifications.
This problem setting is called device-dependent (DD) detection.

\textit{Device Dependent Detection of  Beyond-Quantum State and Its Implementation}---%

Now, we discuss a DD detection of an arbitrary given beyond-quantum state in ESs.
In the following analysis,
instead of the joint distribution,
as a simple indicator,
we focus on the sum of an expectation of a function 
$f(a,i,b,j)$, i.e.,
$\sum_{a,i,b,j} f(a,i,b,j)\Tr \rho M_{a;i}^{A}\otimes M_{b;j}^{B}$
so that
the magnitude relationship of this indicator
makes the required discrimination. 
For our simple analysis, we assume $f(a,i,b,j)=f(a,i)f(b,j)$.
Then, this value 
can be rewritten as

\begin{align}
\begin{aligned}\label{eq:pro-mea2}
	&\sum_{a,i,b,j} f(a,i,b,j)\Tr \rho M_{a;i}^{A}\otimes M_{b;j}^{B}\\
	=&\sum_{a,b} \Tr \rho \cO_a^A\otimes \cO_b^B,
\end{aligned}
\end{align}
where $\cO_a^{A}:=\sum_i f(a,i)M_{a;i}^{A}, \ \cO_b^{B}:=\sum_j f(b,j)M_{b;j}^{B}$.
The Hermitian matrices $\cO_a^{A}$ and $\cO_b^{B}$ can be regarded as standard quantum observables with the POVMs $M_{a;i}^{A}, \ M_{b;j}^{B}$ and outcomes $f(a,i), \ f(b,j)$, respectively.
Therefore, the value $\Tr \rho \cO_a^A\otimes \cO_b^B$ corresponds to the expectation value of the standard quantum observable 
$\cO_a^A\otimes \cO_b^B$ with the state $\rho$.
Hereinafter, we abbreviate the pair of POVMs and outcomes
in the left-hand side of \eqref{eq:pro-mea2} to the right-hand side of \eqref{eq:pro-mea2}
by using observables, according to this correspondence.

Based on the sum of the expectation of
standard quantum local observables,
the following theorem gives 
a DD detection of any beyond-quantum state from all standard quantum states.

\begin{theorem}\label{theorem:criterion}
	Given an arbitrary state $\rho_0\in\cS(\mathrm{SEP}^\ast(A;B))\setminus\cS(\mathrm{SES}(A;B))$,
	there exist families of local observables $\{\cO_{k}^{A}\}_{k=1}^{m}$ 
	and $\{\cO_{k}^{B}\}_{k=1}^{m}$ and a real number $\alpha$ satisfying the following two properties:
	\begin{enumerate}
		\item $\Tr \rho_0 \sum_{k=1}^{m} \cO_{k}^{A}\otimes\cO_{k}^{B} >\alpha$.
		\item $\displaystyle \sup_{\rho_1\in\cS(\mathrm{SES}(A;B))}\Tr \rho_1 \sum_{k=1}^m \cO_{k}^{A}\otimes\cO_{k}^{B} \le\alpha$.
	\end{enumerate}
\end{theorem}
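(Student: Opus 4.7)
The plan is to exploit the identity $\mathrm{SES}(A;B)=\Psd{\cH_A\otimes\cH_B}$, so that any state $\rho_0\in\cS(\mathrm{SEP}^\ast(A;B))\setminus\cS(\mathrm{SES}(A;B))$ is a unit-trace Hermitian matrix that fails to be positive semidefinite, and hence carries at least one strictly negative eigenvalue. The goal is to turn this spectral obstruction into an inequality on the expectation value of a sum of tensor products of local Hermitian observables.

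First I would construct a single global Hermitian witness $W$ that separates $\rho_0$ from the standard-quantum state space. Let $\ket{\psi}\in\cH_A\otimes\cH_B$ be a unit eigenvector of $\rho_0$ with negative eigenvalue $\lambda<0$, and set $W:=-\ketbra{\psi}{\psi}$. Then $\Tr\rho_0 W=-\lambda>0$, whereas for every $\rho_1\in\cS(\mathrm{SES}(A;B))$ one has $\Tr\rho_1 W=-\bra{\psi}\rho_1\ket{\psi}\le 0$ because $\rho_1\succeq 0$. Taking $\alpha=0$ gives both required inequalities for the single operator $W$; it only remains to rewrite $W$ as a sum of tensor products of local observables.

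For this decomposition, I would fix Hermitian bases $\{E_a\}$ of $\cL_\mathrm{H}(\cH_A)$ and $\{F_b\}$ of $\cL_\mathrm{H}(\cH_B)$, for instance generalized Gell-Mann matrices together with a rescaled identity. Then $\{E_a\otimes F_b\}_{a,b}$ is a Hermitian basis of $\cL_\mathrm{H}(\cH_A\otimes\cH_B)$, so $W$ admits a unique real expansion $W=\sum_{a,b}w_{ab}\,E_a\otimes F_b$. Re-indexing $(a,b)\mapsto k$ with $m=(\dim\cH_A\dim\cH_B)^2$ and absorbing each coefficient into one factor, e.g.\ $\cO_k^A:=w_{ab}E_a$ and $\cO_k^B:=F_b$, produces the desired decomposition $W=\sum_{k=1}^m \cO_k^A\otimes \cO_k^B$ with all $\cO_k^A$ and $\cO_k^B$ Hermitian. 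Substituting this identity into the two inequalities from the previous step completes the proof. The only place requiring care is this final step: in a generic Schmidt-type decomposition $W=\sum_k A_k\otimes B_k$ the factors need not be Hermitian, but restricting attention to a Hermitian product basis forces them to be so, which removes the only potential obstacle.
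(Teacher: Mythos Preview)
Your proof is correct and follows the same two–step skeleton as the paper—first produce a Hermitian witness $W$ separating $\rho_0$ from $\cS(\mathrm{SES}(A;B))$, then expand $W$ over a Hermitian product basis of $\Her{\cH_A}\otimes\Her{\cH_B}$—but you reach the witness by a different, more elementary route. The paper invokes the hyperplane separation theorem for the closed convex set $\cS(\mathrm{SES}(A;B))$ to obtain some separating pair $(x,\alpha)$, and then notes that one explicit choice is $x=\rho_0$ itself, justified via the Cauchy--Schwarz inequality. You bypass convex analysis entirely and use spectral data: a unit eigenvector $\ket{\psi}$ of $\rho_0$ with negative eigenvalue $\lambda$ gives $W=-\ketbra{\psi}{\psi}$ and the clean threshold $\alpha=0$. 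Your construction is more direct, yields a rank-one witness, and avoids any appeal to separation theorems; the paper's formulation, on the other hand, makes transparent that the argument would generalize to separation from any closed convex subset of state space, not only the PSD cone. The final step—expanding in a Hermitian tensor-product basis so that the local factors $\cO_k^A,\cO_k^B$ are automatically Hermitian—is handled identically in both proofs, and your remark about why a naive Schmidt decomposition could fail here is a nice clarification the paper leaves implicit.
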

The proof of Theorem~\ref{theorem:criterion} is written in Appendix,
but we remark that we can find $\{\cO_{k}^{A}\otimes\cO_{k}^{B}\}_{k=1}^m$ and $\alpha$ by a deterministic way.
Theorem~\ref{theorem:criterion} guarantees that 
the joint distribution with $\rho_0$
cannot be simulated by 
the joint distribution with any standard quantum state
$\rho_1$ under the common local measurements.
The above discussion can be understand in terms of 
the Semi-Definite Programing (SDP)
with the target function
$\Tr \rho_1 \sum_{k=1}^m \cO_{k}^{A}\otimes\cO_{k}^{B}$
and the trace $1$ condition.
The second relation in Theorem~\ref{theorem:criterion} shows that the solution of the SDP is upper bounded by $\alpha$.
The first relation in Theorem~\ref{theorem:criterion} 
states that 
$\rho_0$ attains a strictly larger value than the solution,
and therefore, $\rho_0$ is not positive semi-definite, i.e., beyond-quantum.

Next, we see that the detection given by Theorem~\ref{theorem:criterion} is implemented as the following DD detection protocol on bipartite scenario (Figure~\ref{fig:protocol}).

\begin{itemize}
\item \textbf{Aim and Strategy}
\vspace{-1ex}
\begin{itemize}
	\item Alice and Bob aim to determine whether a given target global state $\rho$ is beyond-quantum or not.
	\item Alice and Bob choose  $\{\cO_{k}^{A}\otimes\cO_{k}^{B}\}_{k=1}^m$ and $\alpha$ given in Theorem~\ref{theorem:criterion} based on their prediction that the target state $\rho$ is close to a beyond-quantum state $\rho_0$.
	\item Alice and Bob repeat the following protocol by $nm$-times for sufficiently large $n$.
\end{itemize}
\item \textbf{The Whole Protocol}
\vspace{-1ex}
\begin{enumerate}
	\item \textbf{Set Up}: Alice and Bob prepare a generator of the target state $\rho$.
	The generator always transmits the same global state $\rho$.
	\item \textbf{$l$-th Round}:
	The generator transmits the state $\rho$ to the composite system of Alice's and Bob's systems.
	Alice and Bob measure their local observables $\cO^{\rm{A}}_{k}$ and $\cO^{\rm{B}}_{k}$ with $l=qn+k$ ($1\le k \le m$, $q$ is the integer part of the quotient $l/n$).
	As a result, they get outcomes $o^{\rm{A}}_{l}$ and $o^{\rm{B}}_{l}$, respectively.
	\item \textbf{Determination}: Alice and Bob share their outcomes with classical communication. They then calculate the value $\frac{1}{n}\sum_{l=1}^{nm} o^{\rm{A}}_{l} o^{\rm{B}}_{l}$.
	If the inequality $\frac{1}{n}\sum_{l=1}^{nm} o^{\rm{A}}_{l} o^{\rm{B}}_{l}>\alpha$ holds,
	Alice and Bob conclude that the target state $\rho$ is beyond-quantum.
\end{enumerate}
\item \textbf{Justification}
\vspace{-1ex}
\begin{itemize}
	\item  On the limit $n\to\infty$, the value $\frac{1}{n}\sum_{l=1}^{nm} o^{\rm{A}}_{l} o^{\rm{B}}_{l}$ approximates the expectation value $\sum_{k=1}^m \Tr \rho \cO_{k}^{A}\otimes\cO_{k}^{B}$.
	\item If $n$ is sufficiently larger and the inequality $\frac{1}{n}\sum_{l=1}^{nm} o^{\rm{A}}_{l} o^{\rm{B}}_{l}>\alpha$ holds, Theorem~\ref{theorem:criterion} ensures that the target state $\rho$ is beyond-quantum with sufficiently large probability.
	\item If $\frac{1}{n}\sum_{l=1}^{nm} o^{\rm{A}}_{l} o^{\rm{B}}_{l}\le\alpha$ holds for sufficiently large $n$, their prediction $\rho_0$ is sufficiently different from the target state $\rho$.
\end{itemize}
\end{itemize}

\begin{figure}[h]
	\centering
	\includegraphics[width=6cm]{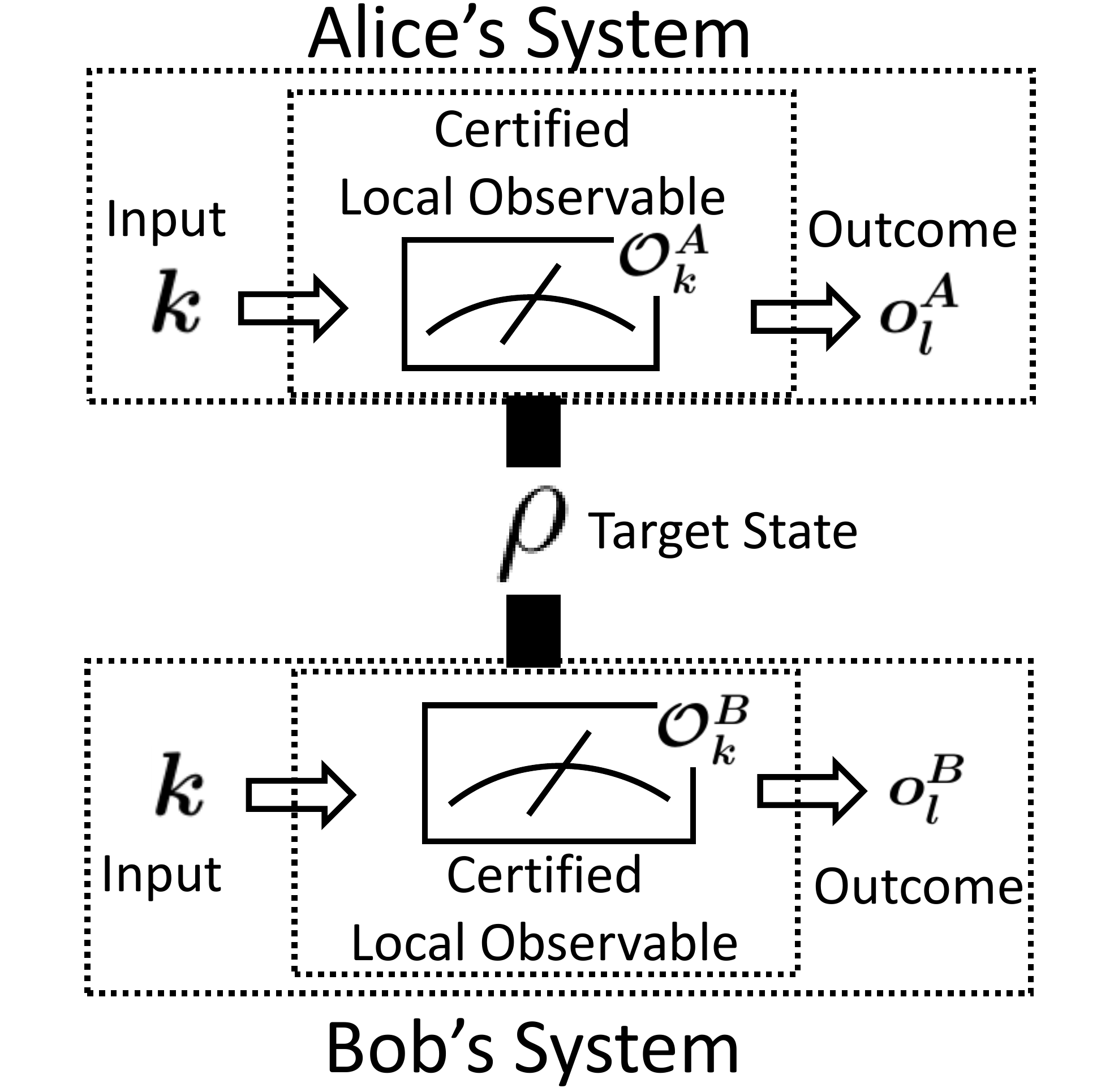}
	\caption{
	The $l$-th Round of the Detection Protocol of Criterion Given in Theorem~\ref{theorem:criterion}.
	Alice and Bob aim to detect whether $\rho$ is beyond-quantum.
	Alice and Bob prepare only their certified local observables with a certain order given in Theorem~\ref{theorem:criterion},
	and they estimate the expectation value in Theorem~\ref{theorem:criterion} as the average of all outcomes gotten in $nm$-rounds of observation.
	}
	\label{fig:protocol}
\end{figure}

In this way,
any beyond-quantum state can be detected by a finite number of certified local quantum observables with large probability.
In general cases,
this detection require large costs because we need to certify a number of local quantum observables dependent on a target state.
However,
in the $2\times 2$ dimensional case,
we give a detection of any beyond-quantum pure state with the certification of only three observables of Pauli's spin observables.

Let us consider an arbitrary ES $\bm{G}=(\cT(\cH),\Tr,\cC,I)$ of two local quantum systems with dimension 2,
i.e.,
we consider the case $\dim(\cH_A)=\dim(\cH_B)=2$.
First, we define the following function $\mathbb{A}_{\mathrm{Pauli}}(\rho;U_A,U_B)$ using Pauli's spin matrices:
\begin{align}
	&\mathbb{A}_{\mathrm{Pauli}}(\rho;U_A,U_B)
	\nonumber \\
	:=&\sum_{c=x,y,z} \Tr \left(U_A\otimes U_B\right)\rho\left(U_A^\dag\otimes U_B^\dag\right) \sigma_c\otimes\sigma_c,
\end{align}
where $U_A,U_B$ and $\sigma_x,\sigma_y,\sigma_z$ denote unitary matrices on $\cH_A,\cH_B$, 
Pauli's spin observables defined as
\begin{align}\label{eq:ob}
	\sigma_x:=
	\left(
	\begin{array}{cc}
		0&1\\
		1&0
	\end{array}
	\right),\quad
	\sigma_y:=
	\left(
	\begin{array}{cc}
		0&-i\\
		i&0
	\end{array}
	\right),\quad
	\sigma_z:=
	\left(
	\begin{array}{cc}
		1&0\\
		0&-1
	\end{array}
	\right),
\end{align}
respectively.

The value $\mathbb{A}_{\mathrm{Pauli}}(\rho;U_A,U_B)$ satisfies the following properties.
\begin{theorem}\label{theorem:A3}
	The following two properties hold:
	\begin{enumerate}
		\item For any beyond-quantum pure state $\rho_0\in\cS(\mathrm{SEP}^\ast(A;B))$, there exist unitary matrices $U_A,U_B$ such that $\mathbb{A}_{\mathrm{Pauli}}(\rho_0;U_A,U_B)>1$.
		\item $\displaystyle \sup_{\substack{U_A,U_B \ : \ \mathrm{\mbox{unitary}}\\\rho_1\in\cS(\bm{SES}(A;B))}}\mathbb{A}_{\mathrm{Pauli}}(\rho_1;U_A,U_B)\le1$.
	\end{enumerate}
\end{theorem}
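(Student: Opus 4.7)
The plan is to reduce both parts to a spectral analysis of $\sum_{c=x,y,z}\sigma_c\otimes\sigma_c$. I would first establish the identity
\begin{equation*}
\sum_{c=x,y,z}\sigma_c\otimes\sigma_c=I-4\ketbra{\Psi^-}{\Psi^-},
\end{equation*}
where $\ket{\Psi^-}=\frac{1}{\sqrt{2}}(\ket{01}-\ket{10})$ is the singlet; this is immediate from the well-known fact that $\sum_c\sigma_c\otimes\sigma_c$ has eigenvalue $+1$ on the three triplet Bell states and $-3$ on $\ket{\Psi^-}$. Substituting into the definition of $\mathbb{A}_{\mathrm{Pauli}}$ and using cyclicity of the trace gives
\begin{equation*}
\mathbb{A}_{\mathrm{Pauli}}(\rho;U_A,U_B)=1-4\bra{\phi}\rho\ket{\phi},
\end{equation*}
where $\ket{\phi}:=(U_A^\dag\otimes U_B^\dag)\ket{\Psi^-}$; as $(U_A,U_B)$ varies, $\ket{\phi}$ runs over all maximally entangled pure states, since local unitaries preserve the Schmidt spectrum.

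Part~2 is then immediate: for $\rho_1\in\cS(\mathrm{SES}(A;B))$, positive semi-definiteness yields $\bra{\phi}\rho_1\ket{\phi}\ge 0$, hence $\mathbb{A}_{\mathrm{Pauli}}(\rho_1;U_A,U_B)\le 1$.

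For Part~1, it suffices to exhibit a maximally entangled $\ket{\phi}$ with $\bra{\phi}\rho_0\ket{\phi}<0$ and then read off the corresponding $U_A,U_B$. I would invoke the Størmer--Woronowicz theorem, which in the $2\otimes 2$ case gives $\mathrm{SEP}(A;B)^\ast=\mathrm{SEP}(A;B)+\mathrm{SEP}(A;B)^{T_B}$ and identifies the extreme rays of $\mathrm{SEP}(A;B)^\ast$ as the rank-one positives $\ketbra{v}{v}$ together with their partial transposes $\ketbra{\psi}{\psi}^{T_B}$. Since the former lie in $\mathrm{SES}(A;B)$, any pure beyond-quantum $\rho_0$ equals $\ketbra{\psi}{\psi}^{T_B}$ for some entangled unit vector $\ket{\psi}$. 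Putting $\ket{\psi}$ into Schmidt form $\ket{\psi}=(V_A\otimes V_B)(c\ket{00}+s\ket{11})$ with $c,s>0$ and $c^2+s^2=1$, a direct computation in the Schmidt basis shows that $(c\ket{00}+s\ket{11})(c\bra{00}+s\bra{11})^{T_B}$ has eigenvalues $\{c^2,s^2,+cs,-cs\}$, with the eigenvector of the negative eigenvalue $-cs$ being the singlet $\ket{\Psi^-}$ itself, which is maximally entangled. Transporting via the intertwining identity $((V_A\otimes V_B)X(V_A^\dag\otimes V_B^\dag))^{T_B}=(V_A\otimes\bar V_B)X^{T_B}(V_A^\dag\otimes\bar V_B^\dag)$, the negative eigenvector of $\rho_0$ becomes $(V_A\otimes\bar V_B)\ket{\Psi^-}$, which remains maximally entangled because local unitaries preserve Schmidt coefficients. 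Choosing $\ket{\phi}$ to be this vector yields $\bra{\phi}\rho_0\ket{\phi}=-cs<0$ and hence $\mathbb{A}_{\mathrm{Pauli}}(\rho_0;U_A,U_B)=1+4cs>1$ for the matching $U_A,U_B$.

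The main obstacle is the structural input that every pure beyond-quantum state in $2\otimes 2$ has the partial-transpose form $\ketbra{\psi}{\psi}^{T_B}$; this rests on decomposability of positive maps $M_2\to M_2$ and is specific to low dimensions, so any attempted generalization beyond two qubits would require new ideas. Once this input is granted, the remaining steps—the spectral computation and the tracking of complex conjugates under partial transpose—are straightforward.
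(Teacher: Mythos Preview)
Your proof is correct and follows essentially the same route as the paper: both rest on the spectral identity $\sum_c\sigma_c\otimes\sigma_c=I-4\ketbra{\Psi^-}{\Psi^-}$ (the paper phrases this equivalently as $A+I=4\Gamma(\Phi_2)$ with $\Gamma$ the partial transpose) together with the St{\o}rmer--Woronowicz/decomposability classification of extreme rays of $\mathrm{SEP}^\ast$ in $2\otimes2$, which is exactly the paper's Proposition~\ref{prop-ext}. The only cosmetic difference is that you diagonalize $\ketbra{\psi}{\psi}^{T_B}$ directly in the Schmidt basis to exhibit the maximally entangled negative eigenvector, whereas the paper transfers the partial transpose via $\Tr\Gamma(X)\Gamma(Y)=\Tr XY$ and reduces Part~1 to $\Tr\Phi_2\bar\rho>\tfrac12$ for a suitable local-unitary rotate $\bar\rho$; both computations output the same value $\mathbb{A}_{\mathrm{Pauli}}=1+4cs$.
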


The proof of Theorem~\ref{theorem:A3} is written in Appendix.
Theorem~\ref{theorem:A3} implies that
in $2\times 2$ dimensional case,
if a target state $\rho$ is beyond-quantum pure, there exists a pair of unitary matrices $U_A$ and $U_B$ such that
$\mathbb{A}_{\mathrm{Pauli}}(\rho;U_A,U_B)$ detects the beyond-quantum pure state $\rho_0$ from all standard quantum states $\rho_1$.
If we can apply the unitary operations in the whole protocol,
it is not necessary to certify the description of unitary matrices.
In other words, we only need to certify the observables $\sigma_x,\sigma_y,\sigma_z$ for all detections of any beyond-quantum pure state.
The criterion given in Theorem~\ref{theorem:A3} is approximately implemented by a reiteration of the protocol in Figure~\ref{fig:protocol} without the certification of unitary operations.

\textit{Conclusion}---%
In this paper,
we have discussed the detection of a beyond-quantum state in ESs of GPTs.
Even though local systems of ESs are equivalent to standard quantum systems,
we have shown that any device independent detection, including CHSH inequality, cannot separate any beyond-quantum state in ESs from the standard quantum states in the SES.
In contrast to device independent detection,
we have given a device dependent detection of an arbitrary given beyond-quantum state in any ES from all states in the SES based on local standard quantum observables.
Also, we have given an experimental implementation of the detection as a bipartite protocol.
The detection needs a large number of observables in general.
However, in the 2-qubits case,
we have given a simple detection based on Pauli's spin observables.
The detection can be implemented only by certified Pauli's spin observables and uncertified unitary operations.

An interesting remaining study is the development to a method to verify that no beyond-quantum state exists. In this case, the method need to be independent of a beyond-quantum state.
Another remaining study is a strict estimation of the error probability of the detection protocol.
In order to discuss how surely our models contains beyond-quantum non-local states,
we need to discuss the protocol in the context of hypothesis testing.

\textit{Acknowledgments}---%
HA is
supported by a JSPS Grant-in-Aids for JSPS Research Fellows No. JP22J14947.
MH is supported in part by the National Natural Science Foundation of China (Grant No. 62171212).

\section{Appendix}

\subsection{Proof of Theorem \ref{theorem:DI}}
We choose an arbitrary beyond-quantum state
 $\rho\in\cS(\bm{SEP}^\ast(A;B))$,
 and define the marginal state $\rho^{A}:=\Tr_B \rho$.
We define the subspace $\cH_A'$ of $\cH_A$
to be the range of $\rho^{A}$.
We denote the projection to $\cH_A'$ by $P$.

Then, we choose a basis of $\cH_A'$ and denote its dimension
by $d_A$.
Based on this basis, 
we denote the matrix on $\cH_A'$ which values 1 only on the $(i,j)$-th entry as $e_{ij}$,
and we define a maximally entangled state $\Phi$ 
on $\cH_A'\otimes \cH_A'$ as
\begin{align}
	\Phi:=\frac{1}{d_A}\sum_{ij}e_{ij}\otimes e_{ij}.
\end{align}
We consider the state
\begin{align}
	\rho':=\left(\frac{1}{\sqrt{d_A}}\sqrt{\rho^{A}}^{-1}\otimes I\right)\rho\left(\frac{1}{\sqrt{d_A}}\sqrt{\rho^{A}}^{-1}\otimes I\right)
\end{align}
on $\cH_A'\otimes \cH_B$.

The matrix $\rho'$ also belongs to $\cS(\bm{SEP}^\ast(A;B))$.
Therefore, Choi-Jamio\l kowski isomorphism \cite{Jami1972} ensures the existence of positive map $F:\Her{\cH_A'}\to\Her{\cH_B}$ satisfying $\rho'=(\id\otimes F)(\Phi)$.
Besides, the two equations $\Tr_B \rho'=\frac{1}{d_A}I$ and
\begin{align}
	\Tr_B (\id\otimes F)(\Phi)
	=\frac{1}{d_A}\sum_{ij} e_{ij}\otimes \Tr F(e_{ij})
\end{align}
imply that the map $F$ is trace preserving.
Therefore,
the state $\rho$ is rewritten as
\begin{align}
	&\rho
	=(\sqrt{d_A \rho^{A}}\otimes I)
	\rho'(\sqrt{d_A \rho^{A}}\otimes I)\nonumber \\
	=&(\sqrt{d_A \rho^{A}}\otimes I)
	\big((\id\otimes F)(\Phi)\big)(\sqrt{d_A \rho^{A}}\otimes I)
	\nonumber \\
	=&
	(\id\otimes F)(\sigma),
\end{align}
where $\sigma:=(\sqrt{d_A\rho^{A}}\otimes I)
\Phi(\sqrt{d_A \rho^{A}}\otimes I)$ 
is a standard quantum entangled state 
on $\cH_A'\otimes \cH_B$.

Next, any local POVMs $\{M_{a;i}^{A}\},\ 
\{M_{b;j}^{B}\}$ satisfy the following equation:
\begin{align}
	&\Tr \rho \left(M_{a;i}^{A}\otimes M_{b;j}^{B}\right)
	=\Tr \rho \left(P M_{a;i}^{A} P\otimes M_{b;j}^{B}\right)\nonumber \\
	&=\Tr \left(\id\otimes F\right)(\sigma)
	\left(P M_{a;i}^{A} P\otimes M_{b;j}^{B}\right)\nonumber \\
	&=\Tr \sigma\left(P M_{a;i}^{A} P \otimes F^\dag\left(M_{b;j}^{B}\right)\right)\nonumber \\
	&=\Tr \sigma\left( M_{a;i}^{A}  \otimes F^\dag\left(M_{b;j}^{B}\right)\right)
\label{eq:pro-mea}
\end{align}
because $P\sigma P=\sigma$.
Since the map $F$ is positive and trace preserving,
the adjoint map $F^\dag$ is positive and unital.
Therefore, the family 
$\left\{M_{b;j}^{B\prime}:=F^\dag\left(M_{b;j}^{B}\right)\right\}_j$ 
is a measurement in the system $B$.
As a result, we obtain the following equation
\begin{align}
	\Tr \rho \left(M_{a;i}^{A}\otimes M_{b;j}^{B}\right)
	=\Tr \sigma \left(M_{a;i}^{A}\otimes M_{b;j}^{B\prime}\right)\label{eq:pro-mea}
\end{align}
for any $a,b,i,j$.

\begin{remark}
The reference \cite{BBB2010}
also consider the inverse of a certain map $M$.
However, this map is not invertible in general.
Hence, the reference \cite{BBB2010} consider an invertible 
map $M_\epsilon$ with a parameter $\epsilon$
such that $\lim_{\epsilon \to 0}M_\epsilon$ equals the non-invertible map $M$.
In this case, 
a pair of 
a standard quantum state
$\rho_1\in\cS(\mathrm{SES}(A;B))$
and 
local POVMs $M_{a}^{A}:=\{M_{a;i}^{A\prime}\}_{i\in I}$ and $M_{b}^{B}:=\{M_{b;j}^{B\prime}\}_{j\in J}$
depends on $\epsilon$.
The reference \cite{BBB2010}
did not discuss the convergence of this pair under the limit 
$\epsilon \to 0$ while its inverse is employed.
However, since our method does not employ 
such an approximation,
our method works even with a non-invertible state $\rho^A$.
\end{remark}

\subsection{Proof of Theorem~\ref{theorem:criterion}}

As the proof of Theorem~\ref{theorem:criterion},
we give the following deterministic way to find the observables $\{\cO_k^{A}\otimes\cO_k^{B}\}_{k=1}^m$ and a real number $\alpha$ for any given state $\rho_0\in\cS(\mathrm{SEP}^\ast(A;B))\setminus\cS(\mathrm{SES}(A;B))$.

At first,
because the state space $\cS(\mathrm{SES}(A;B))$ is a closed convex set and the relation $\rho_0\not\in\cS(\mathrm{SES}(A;B))$,
hyperplane separation theorem \cite{BV2004} ensures the existence of a hyperplane that separates $\rho_0$ from the convex set $\cS(\mathrm{SES}(A;B))$.
In other words,
there exist an element $x\in\Her{\cH_A\otimes\cH_B}$ and a real number $\alpha\in\mathbb{R}$
such that 
$\Tr x\rho_0>\alpha$ and $\Tr x\rho_1\le\alpha$ for any $\rho_1\in\cS(\mathrm{SES}(A;B))$.
In practical situation,
we need to find such a Hermitian matrix $x$ by an analytical way,
but $x=\rho_0$ separates $\rho_0$ and $\cS(\mathrm{SES}(A;B))$ as follows.
Any $\rho_1\in\cS(\mathrm{SES}(A;B))$ satisfies the following inequality:
\begin{align}
	\left|\Tr \rho_0\rho_1\right|\stackrel{(a)}{<}\|\rho_0\|_2\|\rho_1\|_2\stackrel{(b)}{\le}\|\rho_0\|_2.
\end{align}
The inequality $(a)$ is shown by Schwarz inequality and its equality condition.
The equality condition of Schwarz inequality holds only when $\rho_1$ is proportional to $\rho_0$,
which never holds because $\rho_1$ is positive semi-definite and $\rho_0$ is not positive semi-definite.
The inequality $(b)$ is shown by $\|\rho_1\|\le 1$ for any $\rho_1\in\cS(\mathrm{SES}(A;B))$.
On the other hand,
the equation $\Tr \rho_0^2=\|\rho_0\|_2$ holds by definition,
therefore we find at least one element $x=\rho_0$ separating $\rho_0$ and $\cS(\mathrm{SES}(A;B))$.
Due to the latter discussion,
we consider general separation $x$ here.

Next, we formulate the element $x$ as a tensor product form.
Because the element $x$ belongs to the vector space $\Her{\cH_A\otimes\cH_B}=\Her{\cH_A}\otimes\Her{\cH_B}$,
the element $x$ is written as $x=\sum_{k=1}^m x^{A}_k \otimes x^{B}_k$, where $x^{A}_k\in\Her{\cH_A}$ and $x^{B}_k\in\Her{\cH_B}$.
As seen in the latter discussion,
the Hermitian matrices $x^{A}_k$ and $x^{B}_k$ can be regarded as observables $\cO_k^{A}$ and $\cO_k^{B}$, respectively.
Finally,
the observables $\cO_k^{A}$ and $\cO_k^{B}$ satisfy the equation
\begin{align}
	\Tr \rho_0 \sum_{k=1}^m \cO_k^{A}\otimes\cO_k^{B}=\Tr x\rho
\end{align}
for any $\rho\in\cS(\mathrm{SEP}^\ast(A;B))$.

\subsection{Proof of Theorem~\ref{theorem:A3}}

For the proof of Theorem~\ref{theorem:A3},
First,
we introduce the following function $\mathbb{A}'_{\mathrm{Pauli}}(\rho)$ as
\begin{align}
	\mathbb{A}'_{\mathrm{Pauli}}(\rho):=&\sum_{i=x,y,z} \left\langle \sigma_i\otimes\sigma_i\right\rangle_\rho
	=
	\Tr
	\left(
	\begin{array}{cccc}
		1&0&0&0\\
		0&-1&2&0\\
		0&2&-1&0\\
		0&0&0&1
	\end{array}
	\right)\rho.\label{eq:state-A}
\end{align}
Then,
we show the following lemmas.

\begin{lemma}\label{theorem:A1}
	The function $\mathbb{A}'_{\mathrm{Pauli}}(\rho)$ satisfies the following two properties:
	\begin{enumerate}
		\item $\mathbb{A}'_{\mathrm{Pauli}}(\rho)\le 1$ for any $\rho\in\cS(\bm{SES}(A;B))$.
		\item $\mathbb{A}'_{\mathrm{Pauli}}(\rho_{\mathrm{max}})=3$, where $\rho_{\mathrm{max}}\in\cS(\bm{SEP}^\ast(A;B))$ is defined as
		\begin{align}
			\rho_{\mathrm{max}}
			=
			\left(
			\begin{array}{cccc}
				\frac{1}{2}&0&0&0\\
				0&0&\frac{1}{2}&0\\
				0&\frac{1}{2}&0&0\\
				0&0&0&\frac{1}{2}
			\end{array}
			\right)
		\end{align}
	\end{enumerate}
\end{lemma}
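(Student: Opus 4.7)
The plan is to reduce both claims to elementary linear algebra applied to the single $4\times 4$ matrix $A$ on the right-hand side of \eqref{eq:state-A}, which represents $\sigma_x\otimes\sigma_x+\sigma_y\otimes\sigma_y+\sigma_z\otimes\sigma_z$ in the computational basis $\ket{00},\ket{01},\ket{10},\ket{11}$. Since $\mathbb{A}'_{\mathrm{Pauli}}(\rho)=\Tr(A\rho)$ by definition, part~(1) becomes an upper bound on $\Tr(A\rho)$ over standard quantum density matrices, and part~(2) is a direct evaluation at $\rho_{\mathrm{max}}$ together with a check that $\rho_{\mathrm{max}}$ is a valid GPT state.

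For part~(1) I would diagonalize $A$. The one-dimensional coordinate subspaces spanned by $\ket{00}$ and $\ket{11}$ are eigenvectors with eigenvalue $1$, and on the two-dimensional subspace spanned by $\ket{01},\ket{10}$ the matrix $A$ restricts to the block $\left(\begin{smallmatrix}-1 & 2 \\ 2 & -1\end{smallmatrix}\right)$, whose eigenvalues are $-1\pm 2=1,-3$. Hence the full spectrum of $A$ is $\{1,1,1,-3\}$, so $\emax(A)=1$. For any standard quantum state $\rho\in\cS(\bm{SES}(A;B))$ we have $\rho\succeq 0$ and $\Tr\rho=1$, so the elementary inequality $\Tr(A\rho)\le\emax(A)\,\Tr\rho=1$ finishes part~(1).

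For part~(2) I would read off $\Tr(A\rho_{\mathrm{max}})$ entry by entry: the diagonal contributions at $(1,1)$ and $(4,4)$ give $1\cdot\tfrac12+1\cdot\tfrac12=1$, the off-diagonal pairs at $(2,3)$ and $(3,2)$ each contribute $2\cdot\tfrac12$, and all other entries vanish, yielding a total of $3$. I would additionally justify that $\rho_{\mathrm{max}}\in\cS(\bm{SEP}^\ast(A;B))$ by the identification $\rho_{\mathrm{max}}=\tfrac12\,\mathrm{SWAP}$: the trace is then immediate, and for every product state $\bra{\psi_A\psi_B}\rho_{\mathrm{max}}\ket{\psi_A\psi_B}=\tfrac12|\braket{\psi_A|\psi_B}|^2\ge 0$, which is exactly the defining non-negativity condition for membership in the dual cone $\mathrm{SEP}^\ast$.

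I do not expect any serious obstacle here: both assertions collapse to spectral facts about a fixed $4\times 4$ matrix. The one structural observation worth flagging is $\rho_{\mathrm{max}}=\tfrac12\,\mathrm{SWAP}$, which makes membership in $\bm{SEP}^\ast$ transparent and simultaneously makes it obvious via the $-3$ eigenvalue of the central block that $\rho_{\mathrm{max}}\notin\cS(\bm{SES}(A;B))$; without that identification one would fall back on a direct check of block-positivity parametrized by two Bloch vectors, which is feasible but more tedious.
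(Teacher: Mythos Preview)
Your proposal is correct and essentially matches the paper's proof: both reduce part~(1) to the fact that $\emax(A)=1$ and handle part~(2) by direct computation of $\Tr(A\rho_{\mathrm{max}})$. The only cosmetic difference is that the paper obtains $\emax(A)=1$ via the decomposition $A=I-B$ with $B$ rank-one positive semi-definite, whereas you diagonalize the $2\times 2$ block directly; your additional verification that $\rho_{\mathrm{max}}=\tfrac12\,\mathrm{SWAP}\in\cS(\bm{SEP}^\ast(A;B))$ is not in the paper's proof but is a welcome completeness check.
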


\begin{lemma}\label{theorem:A2}
	The function $\mathbb{A}'_{\mathrm{Pauli}}(\rho)$ satisfies 
	\begin{align}
		\max\{\mathbb{A}'_{\mathrm{Pauli}}(\rho)\mid \rho\in\cS(\mathrm{SEP}^\ast(A;B))\}=3
	\end{align}
	and attains the maximum only on $\rho=\rho_{\mathrm{max}}$.
\end{lemma}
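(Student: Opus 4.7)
The plan is to derive both claims from cone duality. Writing $M:=\sum_{c=x,y,z}\sigma_c\otimes\sigma_c$ so that $\mathbb{A}'_{\mathrm{Pauli}}(\rho)=\Tr(M\rho)$, the key step is to exhibit $3I-M$ as an element of $\mathrm{SEP}(A;B)$. Since $\mathrm{SEP}^\ast(A;B)=\mathrm{SEP}(A;B)^\ast$ by definition of the dual cone, this would yield $\Tr(\rho(3I-M))\ge 0$ for every $\rho\in\cS(\mathrm{SEP}^\ast(A;B))$, equivalently $\mathbb{A}'_{\mathrm{Pauli}}(\rho)\le 3$; combined with Lemma~\ref{theorem:A1}(2) this makes $3$ the maximum.

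To produce the separable decomposition I would use the twirling identity $\int_{S^2}n_in_j\,d\mu(\vec n)=\tfrac13\delta_{ij}$, where $\mu$ is the uniform probability measure on the unit sphere and $\vec n\cdot\vec\sigma:=\sum_i n_i\sigma_i$. A direct expansion gives
\begin{align*}
3I-M \;=\; 3\int_{S^2}(I+\vec n\cdot\vec\sigma)\otimes(I-\vec n\cdot\vec\sigma)\,d\mu(\vec n),
\end{align*}
and each factor $I\pm\vec n\cdot\vec\sigma$ is positive semi-definite (spectrum $\{0,2\}$ since $\|\vec n\|=1$), so each integrand is a tensor product of two PSDs, hence an element of $\mathrm{SEP}(A;B)$. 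Because this cone is closed under continuous convex combinations, the integral lies in $\mathrm{SEP}(A;B)$ as required.

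For uniqueness, suppose $\mathbb{A}'_{\mathrm{Pauli}}(\rho)=3$; then $\Tr(\rho(3I-M))=0$ and the non-negative continuous integrand $\Tr\bigl(\rho(I+\vec n\cdot\vec\sigma)\otimes(I-\vec n\cdot\vec\sigma)\bigr)$ must vanish for every unit $\vec n$. Expanding $\rho=\tfrac14\sum_{\mu,\nu\in\{0,x,y,z\}}R_{\mu\nu}\sigma_\mu\otimes\sigma_\nu$ (with $\sigma_0:=I$) and symmetrizing under $\vec n\mapsto-\vec n$ reads off $R_{i0}=R_{0i}=:c_i$ and $R^{\mathrm{sym}}_{ij}=\delta_{ij}$, leaving only a Bloch vector $\vec c$ and the antisymmetric part $A_{ij}=\epsilon_{ijl}a_l$ unconstrained by the dual witness. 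The main obstacle is eliminating these residual parameters; I would do so by testing block positivity of $\rho$ against pure product states $|\phi\rangle\langle\phi|\otimes|\psi\rangle\langle\psi|$ with Bloch vectors $\vec m,\vec k\in S^2$, which gives $1+\vec c\cdot(\vec m+\vec k)+\vec m\cdot\vec k+\vec a\cdot(\vec m\times\vec k)\ge 0$. This expression already vanishes at every antipodal pair $\vec k=-\vec m$, so the first-order perturbation $\vec k=-\vec m+\epsilon\vec v$ with $\vec v\perp\vec m$ must produce a non-negative linear term in $\epsilon$ for both signs, forcing $\vec c\cdot\vec v+\vec a\cdot(\vec m\times\vec v)=0$ for every orthogonal pair $(\vec m,\vec v)$. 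Letting $\vec m$ and $\vec v$ run over coordinate axes yields a linear system whose only solution is $\vec c=\vec a=0$, collapsing $\rho$ to $\tfrac14(I+\sum_i\sigma_i\otimes\sigma_i)=\rho_{\max}$.
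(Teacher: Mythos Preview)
Your proof is correct and takes a genuinely different route from the paper's. The paper relies on Proposition~\ref{prop-ext}, the structural result that in the $2\times2$ case every extremal point of $\cS(\mathrm{SEP}^\ast(A;B))$ is either a pure quantum state or the partial transpose $\Gamma(\rho)$ of a pure entangled state. Since a linear functional is maximized at an extremal point, and Lemma~\ref{theorem:A1} already bounds the value by $1$ on pure quantum states, the paper only needs to compute $\mathbb{A}'_{\mathrm{Pauli}}(\Gamma(\rho))+1=4\Tr\Phi_2\rho\le 4$, with equality precisely at $\rho=\Phi_2$, i.e.\ $\Gamma(\rho)=\rho_{\max}$.

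Your approach replaces this external structure theorem by a direct cone-duality argument: the twirled decomposition $3I-M=3\int_{S^2}(I+\vec n\cdot\vec\sigma)\otimes(I-\vec n\cdot\vec\sigma)\,d\mu$ exhibits $3I-M$ as separable, so $\mathbb{A}'_{\mathrm{Pauli}}\le 3$ on all of $\cS(\mathrm{SEP}^\ast(A;B))$ immediately. Your uniqueness argument---forcing the integrand to vanish, then reading off $R_{i0}=R_{0i}$ and $R^{\mathrm{sym}}=\delta$, and finally killing the residual $\vec c,\vec a$ by the first-order block-positivity condition at the antipodal boundary $\vec k=-\vec m$---is more elaborate than the paper's but entirely self-contained. (The perturbation step is fine: symmetrizing under $\vec m\mapsto-\vec m$ in the constraint $\vec c\cdot\vec v+\vec a\cdot(\vec m\times\vec v)=0$ separates it into $\vec c\cdot\vec v=0$ and $\vec a\cdot(\vec m\times\vec v)=0$, and each piece then vanishes since $\vec v$ and $\vec m\times\vec v$ range over all directions.) The trade-off is clear: the paper's proof is shorter once Proposition~\ref{prop-ext} is available and makes the identity $\rho_{\max}=\Gamma(\Phi_2)$ explicit, which is reused in Theorem~\ref{theorem:A3}; your proof avoids any appeal to the decomposability of entanglement witnesses in $2\times2$ and would work verbatim in settings where such a classification is unavailable.
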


First, we show Lemma~\ref{theorem:A1}.

\begin{proof}[Proof of Lemma~\ref{theorem:A1}]
	\textbf{STEP1}:
	Proof of the statement 1.
	
	At first, the matrix given in \eqref{eq:state-A} is calculated as
	\begin{align}
		A:=
		\left(
		\begin{array}{cccc}
			1&0&0&0\\
			0&-1&2&0\\
			0&2&-1&0\\
			0&0&0&1
		\end{array}
		\right)
		=
		I-
		\left(
		\begin{array}{cccc}
			0&0&0&0\\
			0&2&-2&0\\
			0&-2&2&0\\
			0&0&0&0
		\end{array}
		\right).
	\end{align}
	the second matrix in right-hand-side is positive semi-definite with rank 1.
	Therefore, the maximum eigenvalue of $A$ is 1.,
	which implies that $\mathbb{A}'_{\mathrm{Pauli}}(\rho)\le 1$ for any positive semi-definite matrix $\rho$ with $\Tr \rho=1$. 
	
	\textbf{STEP2}:
	Proof of the statement 2.

	This is shown by the following simple calculation.
	\begin{align}
		\mathbb{A}'_{\mathrm{Pauli}}(\rho_{\mathrm{max}})=
		\Tr
		\left(
		\begin{array}{cccc}
			1&0&0&0\\
			0&-1&2&0\\
			0&2&-1&0\\
			0&0&0&1
		\end{array}
		\right)
		\left(
			\begin{array}{cccc}
				\frac{1}{2}&0&0&0\\
				0&0&\frac{1}{2}&0\\
				0&\frac{1}{2}&0&0\\
				0&0&0&\frac{1}{2}
			\end{array}
			\right)
		=3.
	\end{align}

	As the above,
	Lemma~\ref{theorem:A1} has been proven.
\end{proof}

Next, we show Lemma~\ref{theorem:A2},
In order to show Lemma~\ref{theorem:A2},
we apply the following proposition.
\begin{proposition}[{essentially shown in \cite[Prop 5.6]{CG2014} or \cite{Marciniak2010}}]\label{prop-ext}
	In $2\times2$-dimensional case,
	the set of all extremal points of $\cS(\mathrm{SEP}^\ast(A;B))$ is given as
	\begin{align}\label{eq-pure}
	\begin{aligned}
		&\mathrm{Ext}(\mathrm{SEP}^\ast(A;B))\\
		=&\{\Gamma(\rho)\mid\rho\in\cS(\mathrm{SES}(A;B)) \mbox{ $\rho$ is entangled pure}\}\\
		&\quad\quad\quad\quad\quad\cup\{\rho\mid\rho\in\cS(\mathrm{SES}(A;B)) \mbox{ $\rho$ is pure}\},
	\end{aligned}
	\end{align}
	where $\Gamma:=\id\otimes \top$ is the partial transposition map.
\end{proposition}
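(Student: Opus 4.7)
The plan is to replace the abstract dual cone $\mathrm{SEP}^\ast(A;B)$ by the explicit Minkowski sum $\Psd{\cH_A\otimes\cH_B}+\Gamma\bigl(\Psd{\cH_A\otimes\cH_B}\bigr)$, which coincide in the $2\times2$ case. The inclusion $\supseteq$ is immediate because both summands pair non-negatively with every product operator, while the reverse inclusion is the content of the Woronowicz--St\o rmer theorem (every positive map $M_2(\mathbb{C})\to M_2(\mathbb{C})$ is decomposable), transported to operators via the Jamio\l kowski isomorphism. This rewriting makes the extreme rays accessible because the two summand cones are both well understood.

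Next I would invoke the general convex-geometric observation that if $C_1,C_2$ are pointed closed convex cones and $X$ is an extreme ray of $C_1+C_2$, then in any decomposition $X=c_1+c_2$ with $c_i\in C_i$ each summand must be a non-negative multiple of $X$; in particular $X$ lies in $C_1\cup C_2$ and is an extreme ray of whichever cone contains it. Applying this with $C_1=\Psd{\cH_A\otimes\cH_B}$ (whose extreme rays are the rank-one projectors $\ket{\psi}\bra{\psi}$) and $C_2=\Gamma(C_1)$ (whose extreme rays are their partial transposes), one immediately restricts every extreme point of $\cS(\mathrm{SEP}^\ast(A;B))$ to be of the form $\ket{\psi}\bra{\psi}$ or $\Gamma(\ket{\psi}\bra{\psi})$ for some unit vector $\ket{\psi}$.

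It remains to verify that these candidates are genuinely extreme in the larger cone and to eliminate double counting. For a pure state $\ket{\psi}\bra{\psi}$, any decomposition $\ket{\psi}\bra{\psi}=P+\Gamma(Q)$ with positive semi-definite $P,Q$ forces $0\preceq P\preceq\ket{\psi}\bra{\psi}$ and hence $P=\lambda\ket{\psi}\bra{\psi}$ for some $\lambda\in[0,1]$ by the rank-one bound; then $\Gamma(Q)=(1-\lambda)\ket{\psi}\bra{\psi}$ as well, so both summands are non-negative scalar multiples of $\ket{\psi}\bra{\psi}$ and extremality is confirmed. For $\Gamma(\ket{\psi}\bra{\psi})$ with entangled $\ket{\psi}$, applying $\Gamma$ to any decomposition gives $\ket{\psi}\bra{\psi}=\Gamma(P)+Q$; the same rank-one bound on $Q$ gives $Q=\mu\ket{\psi}\bra{\psi}$ and $\Gamma(P)=(1-\mu)\ket{\psi}\bra{\psi}$, and positivity of $P$ combined with the Peres--Horodecki failure of PPT for entangled pure states in $2\times2$ forces $\mu=1$, so $P=0$ and the decomposition is trivial. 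Finally, when $\ket{\psi}$ is a product vector, $\Gamma(\ket{\psi}\bra{\psi})$ is itself a pure product state, so restricting the second set to entangled pure states removes no extreme points and eliminates redundancy, matching the stated form exactly.

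The only substantive external input is the $2\times2$ Woronowicz--St\o rmer decomposability theorem; everything downstream is the standard rank-one dominance trick together with elementary convex geometry of Minkowski sums. The main obstacle is therefore the careful importation of this classical result, and it is precisely the existence of indecomposable positive maps in larger local dimensions that prevents the same clean argument from generalising beyond $2\times2$.
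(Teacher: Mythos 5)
Your opening reduction --- replacing $\mathrm{SEP}^\ast(A;B)$ by the Minkowski sum $\Psd{\cH_A\otimes\cH_B}+\Gamma\bigl(\Psd{\cH_A\otimes\cH_B}\bigr)$ via two-qubit decomposability --- is exactly the step the paper takes (it cites Peres for this), after which the paper simply quotes \cite[Prop.~5.6]{CG2014}/\cite{Marciniak2010} for the extremal points of the decomposable cone. You instead try to derive that characterization from scratch. Your Minkowski-sum lemma is correct and cleanly yields the inclusion $\mathrm{Ext}\subseteq\{\ketbra{\psi}{\psi}\}\cup\{\Gamma(\ketbra{\psi}{\psi})\}$. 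The gap is in the converse direction, where you verify extremality. From $\ketbra{\psi}{\psi}=P+\Gamma(Q)$ with $P,Q\succeq0$ you conclude ``$0\preceq P\preceq\ketbra{\psi}{\psi}$''; but $P\preceq\ketbra{\psi}{\psi}$ is equivalent to $\Gamma(Q)=\ketbra{\psi}{\psi}-P\succeq0$, and $Q\succeq0$ does \emph{not} imply $\Gamma(Q)\succeq0$ --- the failure of positivity under partial transposition is the entire phenomenon the proposition is about. The same invalid inference is used on $Q$ in the $\Gamma(\ketbra{\psi}{\psi})$ case. So the ``rank-one dominance trick'' does not apply as stated, and the extremality of the candidates is left unproven.

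The conclusion is true, but rescuing it requires genuinely two-qubit input beyond what you invoke. One route: if $P=\ketbra{\psi}{\psi}-\Gamma(Q)\succeq0$, then $\Gamma(Q)$ restricted to the three-dimensional subspace $\psi^{\perp}$ is negative semi-definite, so by Cauchy interlacing $\Gamma(Q)$ has at least three non-positive eigenvalues; combining this with the fact that the partial transpose of a two-qubit positive semi-definite matrix has at most one negative eigenvalue, together with $\Tr\Gamma(Q)^2=\Tr Q^2\le(\Tr Q)^2$, forces $\Gamma(Q)$ to be positive semi-definite of rank at most one, and only then does your scalar-multiple conclusion follow. Without some such argument (or an explicit appeal to the cited characterization, as the paper does), the proof is incomplete precisely at the step you describe as ``standard.''
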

Actually, the reference \cite[Prop 5.6]{CG2014} shows that the all extremal points of the set of decomposable elements in $\cS(\mathrm{SEP}^\ast(A;B))$ is given as \eqref{eq-pure}.
It is known that all elements in $\mathrm{SEP}^\ast(A;B)$ are decomposable \cite{Peres1996} in $2\times2$-dimensional case,
and therefore, Proposition~\ref{prop-ext} holds.

\begin{proof}[Proof of Lemma~\ref{theorem:A2}]
	Because of Proposition~\ref{prop-ext},
	the maximum value of the function of $\mathbb{A}'_{\mathrm{Pauli}}(\rho)$ is attained by an element in $\mathrm{Ext}(A;B)$,
	i.e.,
	the maximum value is attained by a density matrix $\sigma$ with rank 1 or an element $\Gamma(\rho)$ of an density non-separable matrix $\rho$ with rank 1.
	Besides this,
	Lemma~\ref{theorem:A1}
	ensures that an element $\sigma$ with positive semi-definite rank 1 never attains the maximum value.
	Therefore,
	the maximum value of
	the function of $\mathbb{A}'_{\mathrm{Pauli}}(\rho)$ is attained by an element $\Gamma(\rho)$, where $\rho$ is positive semi-definite with rank 1.
	Then, the statement is obtained as follows:
	\begin{align}
		&\mathbb{A}'_{\mathrm{Pauli}}(\Gamma(\rho))+1
		\stackrel{(a)}{=}
		\Tr
		\left\{\left(
		\begin{array}{cccc}
			1&0&0&0\\
			0&-1&2&0\\
			0&2&-1&0\\
			0&0&0&1
		\end{array}
		\right)+I\right\}\Gamma(\rho)\nonumber\\
		=&\Tr \Gamma
		\left(
		\begin{array}{cccc}
			2&0&0&2\\
			0&0&0&0\\
			0&0&0&0\\
			2&0&0&2
		\end{array}
		\right)
		\Gamma(\rho)
		=\Tr 
		\left(
		\begin{array}{cccc}
			2&0&0&2\\
			0&0&0&0\\
			0&0&0&0\\
			2&0&0&2
		\end{array}
		\right)
		\rho\nonumber\\
		=&4\Tr\Phi_2\rho\stackrel{(b)}{\le} 4,
	\end{align}
	where $\Phi_2:=\frac{1}{2}\sum_{ij}e_{ij}\otimes e_{ij}$.
	The equation $(a)$ holds because $\Tr \Gamma(\rho)=1$.
	The inequality $(b)$ holds because $\Phi_2$ and $\rho$ are positive semi-definite projections.
\end{proof}

\begin{proof}[Proof of Theorem~\ref{theorem:A3}]
	The statement (2) is similarly shown by
	Lemma~\ref{theorem:A2}
	because any unitary matrix does not change the trace.
	We will show the statement (1).
	
	Let $\rho_0$ be a beyond-quantum pure state in $\cS(\mathrm{SEP}^\ast(A;B))$,
	i.e.,
	$\rho_0$ is written as $\Gamma(\rho)$, where $\rho$ is 
	an entangled state, i.e., a non-separable positive semi-definite matrix with trace 1 by Proposition~\ref{prop-ext}.
	Then, we obtain the following equation for any $U_A,U_B$:
	\begin{align}
		&\mathbb{A}_{\mathrm{Pauli}}(\Gamma(\rho);U_A,U_B)+1\nonumber\\
		=
		&\mathbb{A}'_{\mathrm{Pauli}}\left(\left(U_A\otimes U_B\right)\left(\Gamma(\rho)+I\right)\left(U_A^\dag\otimes U_B^\dag\right)\right)\nonumber\\
		=
		&\Tr
		\left(
		\begin{array}{cccc}
			2&0&0&0\\
			0&0&2&0\\
			0&2&0&0\\
			0&0&0&2
		\end{array}
		\right)\left(U_A\otimes U_B\right)\Gamma(\rho)\left(U_A^\dag\otimes U_B^\dag\right)\nonumber\\
		=
		&\Tr 4\Gamma(\Phi_2)\left(U_A\otimes U_B\right)\Gamma(\rho)\left(U_A^\dag\otimes U_B^\dag\right)\nonumber\\
		=&\Tr 4\Gamma(\Phi_2)\Gamma\left(U_A\otimes U_B'\right)\rho\left(U_A^\dag\otimes U_B^{\prime\dag}\right)\nonumber\\
		=&\Tr 4\Gamma(\Phi_2)\Gamma(\overline{\rho})=\Tr 4\Phi_2\overline{\rho},
	\end{align}
	where $U_B':=U_B^\top$ and $\overline{\rho}:=\left(U_A\otimes U_B'\right)\rho\left(U_A^\dag\otimes U_B^{\prime\dag}\right)$.
	Because $\overline{\rho}$ is entangled and $\Phi_2$ is maximally entangled in the $2\times 2$-dimensional bipartite quantum system,
	the inequality $\Tr \Phi_2\overline{\rho}>\frac{1}{2}$ holds,
	which implies the statement (2).

	As the above,
	Theorem~\ref{theorem:A3} has been proven.
\end{proof}

\end{document}